\definecolor{green}{rgb}{0,0.5977,0}
\newcommand{\rr}{\mathbb R}
\newcommand{\txt}[1]{\text{#1}}
\g@addto@macro{\@algocf@init}{\SetKwInOut{Parameter}{Parameters}} 
\theoremstyle{plain}
\newtheorem{theorem}{Theorem}
\newtheorem{lemma}[theorem]{Lemma}
\theoremstyle{definition}
\newtheorem{definition}[theorem]{Definition}
\numberwithin{theorem}{section}
\title{The Point-Boundary Art Gallery Problem is $\exists\rr$-hard}
\date{April 2025}
\author[1]{Jack Stade}
\affil[1]{University of Copenhagen, Denmark}
\begin{document}
\maketitle
\begin{abstract}
    We resolve the complexity of the point-boundary variant of the art gallery problem, showing that it is $\exists\mathbb{R}$-complete, meaning that it is equivalent under polynomial time reductions to deciding whether a system of polynomial equations has a real solution. 

    The art gallery problem asks whether there is a configuration of {\it guards} that together can see every point inside of an {\it art gallery} modeled by a simple polygon. The original version of this problem (which we call the point-point variant) was shown to be $\exists\mathbb{R}$-hard [Abrahamsen, Adamaszek, and Miltzow, JACM 2021], but the complexity of the variant where guards only need to guard the walls of the art gallery was left as an open problem. We show that this variant is also $\exists\mathbb{R}$-hard.

    Our techniques can also be used to greatly simplify the proof of $\exists\mathbb{R}$-hardness of the point-point art gallery problem. The gadgets in previous work could only be constructed by using a computer to find complicated rational coordinates with specific algebraic properties. All of our gadgets can be constructed by hand and can be verified with simple geometric arguments. 
\end{abstract}

\tableofcontents

\newpage

\section{Introduction}
\subsection{Art gallery problem}
The original form of the art gallery problem (AGP) presented by Victor Klee (see O'Rourke \cite{ArtGalleryTextbook}) asks whether a simple polygonal region $P$ can be \emph{guarded} by $n$ guards. That is, whether there is a set of $n$ points (called the \emph{guards}) in $P$ such that every point in $P$ is \emph{visible} to some guard, meaning that the the line segment between that point and that guard is contained in $P$. The polygon $P$ is referred to as the \emph{art gallery}. 

The vertices of $P$ are usually restricted to rational or integer coordinates, but even so an optimal configuration might require guards with irrational coordinates. Abrahamsen, Adamaszek and Miltzow give explicit examples that require irrational coordinates in \cite{IrrationalGuards}. For this reason, we don't expect algorithms to actually output the guarding configurations, only to determine how many guards are necessary. 

It is non-trivial to see that the problem is even decidable. The first exact algorithm for the most general variants of the AGP is attributed to Sharir (see Efrat and Har-Peled \cite{FirstExact}). 

\subsection{The complexity class $\exists\rr$}

The decision problem ETR (Existential Theory of the Reals) asks whether a sentence of form:
\[\exists X_1\dots\exists X_n \Phi(X_1,\dots,X_n)\]
is true, where the $X_i$ are real variables and $\Phi$ is a formula in the $X_i$ involving $0$, $1$, $+$, $-$, $\cdot$, $=$, $<$, $\le$, $\neg$, $\wedge$, and $\vee$. The complexity class $\exists\rr$ consists of problems that can be reduced to ETR in polynomial time. A number of interesting problems have been shown to be complete for $\exists\rr$, including for example packing polygons in a square \cite{PackingProblem} and the problem of deciding whether there exists a point configuration with a given order type \cite{Mnev, Stretchability}.

By a result of Schaefer and Stefankovic \cite{ETRRobustness}, the exact inequalities used don't matter; for example we obtain an equivalent definition of $\exists\rr$ if we don't allow $=$, $\le$, or $\neg$ in $\Phi$. The same authors have more recently shown that similar results hold at every level of a \emph{real hierarchy}, of which $\exists\rr$ is the first level \cite{RealHeirarchy}.

It is straightforward to show that $\txt{NP}\subseteq\exists\rr$. It is also known, though considerably more difficult to prove, that $\exists\rr\subseteq \txt{PSPACE}$ (Canny \cite{SubsetPSPACE}). It is unknown whether either inclusion is strict. 

\subsection{Art gallery variants}

A natural class of variants of the art-gallery problem is given by restricting both the positions that can be occupied by the guards and the points that need to be guarded. We will adopt the notation used in \cite{AlmostConvex}:

\begin{definition}
(Agrawal, Knudsen, Lokshtanov, Saurabh, and Zehavi \cite{AlmostConvex}) The \emph{X-Y Art Gallery problem}, where $X, Y\in \{\text{Vertex}, \text{Point}, \text{Boundary}\}$, asks whether the polygon $P$ can be guarded with $n$ guards, where if $X=\text{Vertex}$ the guards are restricted to lie on the vertices of the polygons, if $X=\text{Boundary}$ the guards are restricted to lie on the boundary of the polygon, and if $X=\text{Point}$ then the guards can be anywhere inside the polygon. The region that must be guarded is determined by $Y$ analogously.
\end{definition}

\Cref{tab:InitialsTable} lists these variants and the known bounds on complexity.

\begin{table}[ht]
\begin{center}
\begin{tabu}{|c|c|c|}
\hline
Variant&Complexity Lower Bound&Complexity Upper Bound\\
\hline
Vertex-Y&NP\cite{NPHardness}&NP\\
X-Vertex&NP\cite{NPHardness}&NP\\
Point-Point&$\exists\rr$\cite{ExistsRHardness}&$\exists\rr$\cite{ExistsRHardness}\\
Boundary-Point&$\exists\rr$\cite{ExistsRHardness}&$\exists\rr$\cite{ExistsRHardness}\\
Point-Boundary&$\mathbf{\exists\rr}${\bf [this paper]}&$\exists\rr$\cite{ExistsRHardness}\\
Boundary-Boundary&NP\cite{NPHardness}&$\exists\rr$\cite{ExistsRHardness}\\
\hline
\end{tabu}
\caption{Variants of the art gallery problem}
\label{tab:InitialsTable}
\end{center}
\end{table}

If $X$ or $Y$ is \emph{Vertex}, then the problem is easily seen to be in NP. Lee and Lin \cite{NPHardness} showed that all of these variants are NP-hard (the result is stated for all the X-Point variants, but their constructions also work for the other variants. See also \cite{BoundaryNPHard}). More recently, Abrahamsen, Adamaszek, and Miltzow \cite{ExistsRHardness} showed that the point-point and boundary-point variants are $\exists\rr$ complete. It is straightforward to extend their proof of membership in $\exists\rr$ to any of these variants, but they list $\exists\rr$ hardness of the point-boundary variant an open problem.

\subsection{Our results}

Our main result is that the point-boundary variant of the art gallery problem is, up to polynomial time reductions, as hard as deciding whether a system of polynomial equations has a real solution.

\begin{theorem}\label{BGMain}
The point-boundary variant of the art gallery problem is $\exists\rr$-complete.
\end{theorem}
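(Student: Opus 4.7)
The plan is to separate the two containments. Membership in $\exists\rr$ follows by essentially the same argument as for the Point-Point variant: the positions of a fixed number $k$ of guards are $2k$ real parameters, and the condition that every boundary point is visible from some guard can be expressed as a first-order formula over the reals by quantifying over a parameterization of the boundary and writing visibility as a conjunction of polynomial inequalities. The real work is $\exists\rr$-hardness, for which I would reduce from the problem of deciding whether a system of equations of the form $z = x + y$, $z = xy$, and $x = c$ over bounded real variables is feasible, a problem which is $\exists\rr$-complete.

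The reduction would build a polygon $P$ and a guard budget $k$ so that $P$ admits a Point-Boundary guarding by $k$ guards if and only if the polynomial system has a real solution. The polygon would be assembled from three kinds of gadgets, connected by narrow corridors that otherwise have no effect on visibility: \emph{variable gadgets} that each force exactly one guard into a controlled sub-region and encode a real value by that guard's coordinate along a chosen axis; \emph{arithmetic gadgets} for each equation, whose boundary can be fully covered only when the three relevant variable guards sit at positions satisfying the equation; and \emph{copy gadgets} that synchronize the values encoded by different occurrences of the same variable. The design I have in mind for the arithmetic gadgets places thin pockets in the boundary whose inner walls are visible only along narrow visibility wedges, so that full coverage of the pockets amounts to the simultaneous intersection of three such wedges along an algebraic curve $\{xy = z\}$ or $\{x + y = z\}$. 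This is essentially a boundary-coverage adaptation of the slit-based guard-reflection trick used by Abrahamsen, Adamaszek, and Miltzow in their proof of the Point-Point case.

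I expect the main obstacle to come in two places. First, I must rule out ``cheating'' guard placements: configurations that achieve boundary coverage via guards shared between gadgets, or through clever reuse of boundary pieces that does not correspond to a feasible variable assignment. The standard tool for forcing one guard per sub-gadget is to attach long, narrow spikes whose visibility wedge is so constrained that only a guard local to the gadget can cover them, and verifying this works in the Point-Boundary setting is delicate but plausible. The second and deeper obstacle is that the Point-Point arithmetic gadgets in the literature implicitly rely on interior visibility, so a direct port will not work; constructing a multiplication gadget whose algebraic constraint is enforced entirely through boundary coverage is where I expect the principal technical novelty of the proof to reside.
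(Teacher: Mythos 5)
Your membership argument is fine and matches the paper's, which simply notes that the Abrahamsen--Adamaszek--Miltzow membership proof for the Point-Point case extends to all variants. On the hardness side, however, there is a genuine gap: you reproduce the high-level architecture shared with the Point-Point proof (variable gadgets, copy gadgets, arithmetic gadgets) and then explicitly defer the two steps that constitute the actual proof. First, copying. To ``synchronize'' two occurrences of a variable you need both $x\le y$ and $x\ge y$; the known copy construction gives only one direction per nook, and Abrahamsen et al.\ obtain the reverse direction from the requirement to guard the \emph{interior}, which is exactly what is unavailable in the Point-Boundary setting. The paper sidesteps this by not reducing from an equational system at all: it reduces from $\txt{ETR-INV}^{rev}$, whose constraints ($x+y\le z$, $x+y\ge z$, $xy\ge 1$, $x(\frac52-y)\le 1$) are monotone in each single variable, so one one-directional copy nook per occurrence suffices. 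Your chosen source problem, with equalities $z=x+y$ and $z=xy$, forces you to solve precisely the problem for which you propose no mechanism.

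Second, the arithmetic gadgets, which you yourself flag as ``where the principal technical novelty resides'' --- that is, you are proposing the theorem minus its proof. Moreover, your sketch (a pocket whose coverage is the simultaneous intersection of three visibility wedges along the curve $\{xy=z\}$) is likely unworkable as stated: the paper observes that a constraint enforced by a single collaboratively guarded nook can depend continuously on at most two guards. The paper obtains a three-variable constraint only indirectly, by introducing an auxiliary guard whose \emph{two} coordinates are constrained by three separate two-party nooks (Lemma \ref{AdditionLemma}, proved via a projective transformation sending the three pencils of lines to parallel families), and it avoids three-variable multiplication entirely, using only the two-variable inversion constraints of $\txt{ETR-INV}^{rev}$, which are realized geometrically by the similar-triangle and angle-copy lemmas (Lemmas \ref{InversionLemma} and \ref{AngleCopyLemma}) between parallel guard segments, with rational coordinates obtained constructively rather than by solving quadratics. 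Until you specify gadgets at this level of detail --- in particular, how equality-copying or a genuinely three-variable multiplicative constraint could be enforced through boundary coverage alone --- the reduction is not established.
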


While the complexity of the boundary-boundary variant is still unsolved, this is is enough to show that the the X-Y art gallery problem is equivalent to the Y-X art gallery problem for $X, Y\in \{\text{Vertex}, \text{Point}, \text{Boundary}\}$. 

Our ideas can also be used to considerably simplify the construction in \cite{ExistsRHardness}. Both the construction and verification of each of our gadgets is simpler than that of the corresponding gadget in \cite{ExistsRHardness}. Indeed, our gadgets can be drawn by hand with little more than compass-and-straightedge-style constructions. Each gadget is specified by a simple set of geometric properties rather than by exact coordinates. Our gadgets are quite flexible and the geometry could probably be adapted to other settings.

Our reduction is from a problem called ETR-INV-REV, which is a slight modification of a problem ETR-INV from \cite{ExistsRHardness}. In \Cref{sec:ETR_INV_REV}, we define this problem and show that it is $\exists\rr$-hard. 

Given an instance $\Phi$ of ETR-INV-REV, we show how to construct a polygon whose boundary can be guarded by some number $n$ guards if and only if $\Phi$ has a satisfying assignment. In \Cref{sec:Arrangement}, we describe the overall structure of the polygon that we construct and explain how to constrain the positions of guards. In \Cref{sec:CopyNooks}, we describe copy gadgets that are required in order to use a single variable in multiple constraints. Finally, in \Cref{sec:GadgetVerification} we describe the gadgets that create constraints and prove \Cref{BGMain}.

\section{ETR-INV-REV}\label{sec:ETR_INV_REV}

The proof of \Cref{BGMain} is by reduction of the problem we call ETR-INV-REV to the point-boundary variant of the art gallery problem.

\begin{definition}
(ETR-INV-REV) In the problem ETR-INV-REV, we are given a set of real variables $\{x_1, \dots, x_n\}$ and a set of inequalities of the form:

\[x=1,\quad xy\ge 1,\quad x\left(\frac52-y\right)\le 1,\quad x+y\le z,\quad x+y\ge z,\]
for $x, y, z\in \{x_1,\dots, x_n\}$. The problem asks whether there is an assignment of the $x_i$ satisfying these inequalities with each $x_i\in[\frac12, 2]$.
\end{definition}
Abrahamsen, Adamaszek and Miltzow \cite{ExistsRHardness} proved the $\exists\rr$-hardness of the point-point and boundary-point variants using a similar problem called ETR-INV.

\begin{definition}(Abrahamsen, Adamaszek and Miltzow \cite{ExistsRHardness})
(ETR-INV) In the problem ETR-INV, we are given a set of real variables $\{x_1, \dots, x_n\}$ and a set of equations of the form:

\[x=1,\quad xy=1,\quad x+y=z,\]
for $x, y, z\in \{x_1,\dots, x_n\}$. The problem asks whether there is a solution to the system of equations with each $x_i\in[\frac12, 2]$.
\end{definition}

\begin{theorem}(Abrahamsen, Adamaszek, Miltzow \cite{ExistsRHardness})
The problem ETR-INV is $\exists\rr$-complete.
\end{theorem}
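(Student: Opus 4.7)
The plan is to establish both membership and hardness for ETR-INV in $\exists\rr$.

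Membership is immediate: an ETR-INV instance is literally a conjunction of polynomial (in)equalities over the reals (the explicit constraints $x=1$, $xy=1$, $x+y=z$ together with the bounds $\tfrac{1}{2}\le x_i\le 2$), so packaging them into an existentially quantified sentence gives a polynomial-time reduction to ETR.

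For $\exists\rr$-hardness, I would reduce from ETR. First, using standard normalization (the Schaefer-Stefankovic robustness results, combined with classical bounds on the magnitude of real solutions) bring the input into a conjunction of polynomial equations $p_i(X_1,\ldots,X_n)=0$ with rational coefficients whose satisfying assignments, if any, can be assumed to lie in an effectively bounded region. Then, in the standard Tseitin fashion, introduce fresh auxiliary variables to break each $p_i$ into atomic constraints of three shapes only: $x=c$ for a rational constant $c$, $x+y=z$, and $xy=z$. Next, by a global affine rescaling $X_i=\alpha_i(x_i-\beta_i)$, shift every variable into the box $[\tfrac12,2]$; the rescaling absorbs additive and multiplicative constants into extra auxiliary equations, so each rescaled $x+y=z$ and $xy=z$ constraint takes the same shape as the original, plus a few known-constant side conditions that are themselves built from the primitive $x=1$.

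The main obstacle is the last step: simulating multiplication $z=xy$ using only the inversion primitive $xy=1$. A clean way to do this is via the identity $\frac{1}{x}+\frac{1}{y}=\frac{x+y}{xy}$. Given $x,y$, introduce auxiliary variables $a,b,s,t,u$ with $xa=1$, $yb=1$, $a+b=s$, $st=1$, $x+y=u$; then $t=xy/(x+y)$ and we need $z\cdot u = xy \cdot u / u = xy$, reducing to a constraint $z\cdot u=$ something expressible in the allowed primitives, which after a further inversion and sum becomes an ETR-INV-legal constraint. The technical heart of the argument is orchestrating these auxiliary variables and choosing the affine rescaling so that every intermediate quantity lives inside $[\tfrac12,2]$; once this is achieved for every atomic multiplication in the normalized ETR instance, the construction yields an ETR-INV instance of polynomial size, completing the reduction.
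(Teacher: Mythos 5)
Your membership argument is correct and your high-level plan for hardness (reduce from ETR, decompose into atomic constraints, affinely rescale into $[\tfrac12,2]$, simulate $xy=z$ using only $x+y=z$ and $xy=1$) is the right shape. But note this theorem is a cited result in the present paper, not one it re-proves; so the only thing to assess is whether your sketch stands on its own.

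There is a genuine gap in the multiplication-elimination step, and it is the crux of the whole reduction. From your auxiliaries you obtain $t = xy/(x+y)$ and $u = x+y$, and you need to enforce $z = xy$, i.e.\ $z = tu$. This is just another arbitrary product of two variables; the subsequent ``further inversion and sum'' you gesture at never materializes, because $z=tu$ is exactly the kind of constraint you set out to eliminate. The identity $\frac1x+\frac1y=\frac{x+y}{xy}$ trades one multiplication for another rather than eliminating it. The standard fix is to go through squaring: from $w$ one obtains $w^2$ using only addition and inversion via
\[
\frac{1}{w}-\frac{1}{w+1}=\frac{1}{w^2+w},
\]
so compute $\frac1w$, $\frac{1}{w+1}$, subtract, invert, then subtract $w$ to get $w^2$. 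General products then come from the polarization identity $xy=\tfrac12\bigl((x+y)^2-x^2-y^2\bigr)$. (Of course, this requires managing ranges so that $w$, $w+1$, $w^2+w$ and all intermediate quantities can be affinely mapped into $[\tfrac12,2]$; that bookkeeping, together with the fact that your per-variable rescalings $X_i=\alpha_i(x_i-\beta_i)$ distort $x+y=z$ and $xy=z$ into constraints needing further auxiliary variables and constants, is the technical heart of the Abrahamsen--Adamaszek--Miltzow argument and cannot be waved away.) As written, your reduction does not terminate: every attempt to express a product reintroduces a product.
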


Regardless of the specific art gallery variant, it seems to be difficult to make a construction that admits both $xy\le 1$ and $xy\ge 1$ constraints. The $xy\le 1$ inversion gadget in \cite{ExistsRHardness} is actually closer to a $x\left(\frac52-y\right)\le 1$ gadget, and another gadget is used to compute a constraint like $y'=\left(\frac52-y\right)$. By reducing from ETR-INV-REV instead, our construction avoids the need for a reversing gadget.

\begin{theorem}\label{ETRRevHardness}
The problem ETR-INV-REV is $\exists\rr$-hard.
\end{theorem}

\begin{proof}
For a variable $x$, we will show how to construct some additional variables, including a variable $V$, as well as some constraints that can only be satisfied if $V=\frac52-x$. These constraints should also be satisfiable for any value of $x\in I$ for some interval $I$ of length at most $1$ (the constraints depend on the choice of interval $I$).

Miltzow and Schmiermann \cite{ConvexConcave} show that the addition and constant constraints are sufficient to construct a variable equal to any rational number in $[\frac12, 2]$. We will describe the construction for $I=[1, 2]$, and the constraints for any other interval can be obtained by translating.

Using the construction from \cite{ConvexConcave}, we construct variable with values $\frac32$ and $\frac52$. We also add variables $V_1$, $V_2$, $V_3$ and $V$ and the following constraints:

\[V_1+V_1=x\quad\left(V_1=\frac12x\in[\frac12, 1]\right)\]
\[V_1+V_2=\frac32\quad\left(V_2=\frac32-\frac12x\in[\frac12, 1]\right)\]
\[V_3=V_2+V_2\quad\left(V_3=3-x\in[1, 2]\right)\]
\[V+\frac12=V_3\quad\left(V=\frac52-x\in[\frac12, \frac32]\right)\]

Examining the proof of $\exists\rr$-hardness of ETR-INV in \cite{ExistsRHardness}, we can see that it produces instances of ETR-INV with the following property: every time a $xy=1$ constraint appears, there is an interval $I$ of length at most $1$ such that $x\in I$ in any satisfying assignment of the instance. The interval $I$ is always one of $[\frac{8}{15}, \frac{8}{13}], [\frac{8}{7}, \frac{8}{5}]$ or $[\frac{65}{64}, \frac{105}{64}]$ (and is known when each $xy=1$ constraint is constructed). 

Starting with such an ETR-INV instance $\Phi$, we construct an ETR-INV-REV instance $\Psi$. For each $x+y=z$ constraint in $\Phi$, we put constraints $x+y\le z$ and $x+y\ge z$ in $\Psi$. For each $xy=1$ constraint, we add constraints $xy\ge 1$ and $y\left(\frac52-V\right)\le 1$ for $V$ constructed as above.

$\Psi$ has a satisfying assignment if and only if $\Phi$ does. This reduction can be performed in polynomial time, so ETR-INV-REV is $\exists\rr$-hard.
\end{proof}

\section{Arranging the art gallery}\label{sec:Arrangement}

Throughout this section, and the rest of the paper, $AB$ refers to a line segment with endpoints $A$ and $B$, $\overleftrightarrow{AB}$ is the line containing that segment, and $|AB|$ is the length of that segment.

\subsection{Creating guard regions}

We will designate some number $n$ of disjoint \emph{guard regions} inside the art gallery such that any guarding configuration with $n$ guards must have exactly one guard in each region. Each guard region is determined by \emph{wedges} on the polygon boundary. Each wedge is formed by two edges of the art gallery meeting at a convex corner. The \emph{visibility region} of a wedge is the set of points that can see the tip of the wedge. A wedge and its visibility region is shown in \Cref{fig:RegionSegment} (left).

\begin{figure}
\centering
\includegraphics[page=1]{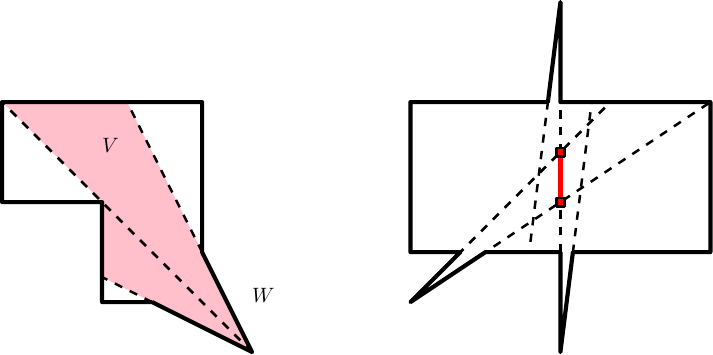}
\caption{Left: A wedge ($W$) and its visibility region ($V$). Right: The intersection of the visibility regions of the three wedges shown forms a guard segment. This method of using $3$ wedges to form a guard segment is due to Bertschinger, El Maalouly, Miltzow, Schnider and Weber \cite{HomotopyUniversality} and is a slight simplification of the method for forming guard segments from \cite{ExistsRHardness}.}
\label{fig:RegionSegment}
\end{figure}

In a guarding configuration, the visibility region of each wedge must contain a guard. We designate a guard region by specifying some (non-zero) number of wedges, and the guard region is the intersection of the visibility regions of those wedges. These must be chosen that visibility regions of any two wedges \emph{corresponding to different guard regions} do not intersect. A guard region shaped like a line segment is called a \emph{guard segment}, as in \Cref{fig:RegionSegment} (right).

\begin{lemma}\label{lem:GuardRegions}
If we designate $n$ guard regions, then any guarding configuration has at least $n$ guards, and a guarding configuration with exactly $n$ guards has $1$ guard in each guard region.
\end{lemma}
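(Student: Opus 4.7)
The plan is to use a simple pigeonhole/covering argument driven by the tips of the wedges. First, I would observe that the tip of each wedge is a point on the boundary of the art gallery and so must be visible to some guard; but by the definition of visibility region, the only guards that can see the tip of a wedge $W$ are those lying in its visibility region $\mathrm{Vis}(W)$. Consequently, for every wedge $W_{i,j}$ defining every guard region $R_i = \bigcap_j \mathrm{Vis}(W_{i,j})$, the guarding configuration must include at least one guard in $\mathrm{Vis}(W_{i,j})$.

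Next I would exploit the disjointness hypothesis in the definition of guard regions: visibility regions of wedges belonging to different guard regions are pairwise disjoint. This means that any particular guard in the configuration lies in the visibility regions of wedges from at most one guard region, giving a well-defined partial ``assignment'' of guards to guard regions. Since each of the $n$ guard regions requires at least one guard assigned to it (to see the tips of its wedges), and no guard can be assigned to more than one region, at least $n$ guards are required in total.

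For the second claim, I would note that if the configuration has exactly $n$ guards, then each region receives exactly one assigned guard, and no other guard lies in any of that region's wedges' visibility regions. Therefore the single guard assigned to $R_i$ must by itself lie in $\mathrm{Vis}(W_{i,j})$ for every $j$, i.e.\ in the intersection $\bigcap_j \mathrm{Vis}(W_{i,j}) = R_i$. Hence each guard region contains exactly one guard. I do not anticipate any real obstacle; the argument is essentially pigeonhole on top of the intersection structure, and the only subtle point is verifying that the disjointness hypothesis makes the ``assignment'' of guards to regions unambiguous, which is immediate from the definition.
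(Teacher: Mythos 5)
Your proof is correct and takes essentially the same approach as the paper: a pigeonhole argument driven by the disjointness of visibility regions of wedges from different guard regions, with the second claim following because the single guard "charged" to a region must be the one seeing every tip of that region's wedges, hence lies in their intersection. The only cosmetic difference is that the paper fixes one representative wedge per region to set up the pigeonhole, whereas you phrase it as a well-defined assignment of guards to regions; the logic is the same.
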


\begin{proof}
Choose wedges $w_1,\dots,w_n$, one from each of the guard regions. The guard regions are chosen so that the visibility regions of these wedges do not intersect. In a guarding configuration, each of these regions contains at least $1$ guard. Since these region do not intersect, this requires at least $n$ guards.

If a guarding configuration contains exactly $n$ guards, then there will be exactly $1$ guard in the visibility region of $w_i$ for each $i$. Consider the guard region containing $w_j$ and suppose $v$ is another wedge forming that guard region. There is a guard in the visibility region of $v$, and this region doesn't intersect the visibility region of $w_i$ for $i\ne j$, so this guard must be the one in $w_j$. Repeating this argument for the other wedges forming this guard region, we see that the guard in $w_j$ must be in each of them. So the guard is in the guard region.

So a guarding configuration with exactly $n$ guards has $1$ guard in each guard region.
\end{proof}

The guard segments in our construction will all be vertical, and the position of a guard on a guard segment will encode a variable in the interval $[\frac12, 2]$, with the bottom endpoint of the interval corresponding to $\frac12$ and the top endpoint corresponding to $2$. 

\subsection{Constraint nooks}

We use constraint \emph{nooks} to enforce constraints involving multiple variables. Each nook consists of a line segment on the boundary of the art gallery, called the \emph{nook segment}, and two polygon vertices called \emph{opening points}. The \emph{region of partial visibility} of a nook is the set of points which can see some part of the nook segment.

If the nook segment has endpoints $A$ and $B$ and the opening points are $E$ and $F$ where $E, A, B, F$ occur in that order on the polygon boundary, then we say that $E$ \emph{restricts} $A$ and $F$ \emph{restricts} $B$ (see \Cref{fig:NookCollab}).

These are slightly more general than the nooks from \cite{ExistsRHardness}, in that it will sometimes be necessary to intersect a wedge with a nook, as in \Cref{fig:NookCollab} (left).

\begin{figure}
\centering
\includegraphics[page=2]{Art_Gallery_Paper.pdf}
\caption{Left: A nook with nook segment $AB$, opening points $E$ restricting $A$ and $F$ restricting $B$, and partial visibility region $V$. The outer parts of $V$ (lighter) can only see part of the nook segment. Right: This art gallery can be guarded by two guards, but there is no such configuration where the nook segment on the left is guarded entirely by a single guard. This creates a continuous constraint between the positions of the two guards.}
\label{fig:NookCollab}
\end{figure}

A guarding configuration will have some non-zero number of guards in the partial visibility region of a nook, that together must guard the nook segment, as in \Cref{fig:NookCollab} (right). Multiple guards can collaborate to guard a single nook segment, creating a continuous dependence between the positions of those guards.

\subsection{Schematic of the art gallery}

\begin{figure}
\centering
\includegraphics[page=3]{Art_Gallery_Paper.pdf}
\caption{A schematic of the art gallery that we construct (not to scale). The variables are encoded by positions of guards on the variable segments (red, middle). The constraint gadgets on the bottom right enforce constraints like $xy\ge 1$ or $x+y\le z$ on the variables encoded by the input segments (red, bottom right). The copy nooks copy the values of the appropriate variables onto the input segments. The wedges creating the segment-shaped guard regions can be seen along the top and bottom walls of the art gallery.}
\label{fig:FullSchematic}
\end{figure}

\Cref{fig:FullSchematic} shows a schematic of the entire art gallery construction. We start with an instance $\Phi$ of ETR-INV-REV with variables $x_1, \dots, x_n$ and constraints of form:

\[x=1,\quad xy\ge 1,\quad x\left(\frac52-y\right)\le 1,\quad x+y\le z,\quad x+y\ge z,\]

Each variable is represented by a guard region called a \emph{variable segment}. The variable segments are arranged in a row near the middle of the art gallery. The height of a guard on a variable segment corresponds linearly to the value of the variable encoded, with the bottom of the variable segment mapping to $\frac12$ and the top mapping to $2$.

The constraints of form $x=1$ in $\Phi$ are created by adding wedges to create a guard region consisting of only a single point, as shown in \Cref{fig:GuardPoint}. Each constraint involving more than one variable is created by a \emph{constraint gadget}, which we place in a row on the bottom right of the art gallery. The constraint gadgets are modular in the sense that they (almost) do not depend on $\Phi$. Each constraint gadget operates on either $2$ or $3$ guard segments called \emph{input segments}. Nooks in the constraint gadget create constraints on the variables represented by the guards on the input segments. 

Each input segment can only appear in one constraint gadget, so we need \emph{copy nooks} to relate each input segment to the variable segment for the variable that it represents. 

\begin{figure}
\centering
\includegraphics[page=4]{Art_Gallery_Paper.pdf}
\caption{Four wedges create a point-shaped guard region. }
\label{fig:GuardPoint}
\end{figure}

\section{Construction of the copy nooks}\label{sec:CopyNooks}

If a guard on a variable segment encodes a value $x$ and a guard on an input segment encodes a value $y$, then a copy nook creates a constraint like $x\le y$ or $x\ge y$. \Cref{fig:CopyNook} shows an example of a copy nook enforcing a constraint like $x\ge y$. This type of copy nook is due to Stade and Tucker-Foltz \cite{TopologyUniversality}.

\subsection{Verifying a single copy nook}

\begin{figure}
\centering
\includegraphics[page=5]{Art_Gallery_Paper.pdf}
\caption{A copy nook, shown in the situation where the guard positions don't fully guard the nook. }
\label{fig:CopyNook}
\end{figure}

\begin{lemma}\label{lem:CopyLemma}
Suppose segments $AB$ and $CD$ are such that $\overleftrightarrow{AB}$ and $\overleftrightarrow{CD}$ are parallel, and suppose $\overleftrightarrow{AC}$ and $\overleftrightarrow{BD}$ intersect at a point $P$, as in \Cref{fig:ParallelCopy}. If a line through $P$ intersects $AB$ at a point $X$ and intersects $CD$ at a point $Y$, then $\frac{|AX|}{|AB|}=\frac{|CY|}{|CD|}$.
\end{lemma}

\begin{proof}
Triangles $APB$ and $CPD$ are similar, so $\frac{|AP|}{|AB|}=\frac{|CP|}{|CD|}$. Also, the triangles $AXP$ and $CYP$ are similar, so $\frac{|AX|}{|AP|}=\frac{|CY|}{|CP|}$. Multiplying, we obtain $\frac{|AX|}{|AB|}=\frac{|CY|}{|CD|}$.
\end{proof}

\begin{figure}
\centering
\includegraphics[page=6]{Art_Gallery_Paper.pdf}
\caption{Setup for \Cref{lem:CopyLemma}. }
\label{fig:ParallelCopy}
\end{figure}

\begin{lemma}\label{lem:CopyNook}
Suppose that $AB$ and $CD$ are parallel guard segments and there is a nook with nook segment $EF$ that is parallel to $AB$ and $CD$. Suppose the nook has openings points $P$ on the intersection point of $\overleftrightarrow{AE}$ and $\overleftrightarrow{BF}$ and $Q$ on the intersection of $\overleftrightarrow{BE}$ and $\overleftrightarrow{CF}$ so that $P$ restricts $F$ and $Q$ restricts $G$.

Suppose also that the nook is unobstructed, that is that every segment through $P$ from a point on $AB$ to a point on $EF$ is contained in the art gallery and segments through $Q$ are similarly unobstructed.

Then guards at positions $X$ and $Y$ on $AB$ and $CD$ respectively guard the nook if and only if:

\[\frac{|AX|}{|AB|}\ge \frac{|CY|}{|CD|}\]
\end{lemma}

The setup for \Cref{lem:CopyNook} is shown in \Cref{fig:CopyNook}. The proof is a straightforward application of \Cref{lem:CopyLemma}.

\begin{figure}
\centering
\includegraphics[page=7]{Art_Gallery_Paper.pdf}
\caption{A guard ($g_1$) on a variable segment and a guard ($g_2$) on an input segment are related by two copy nooks. The upper nook is guarded when the value encoded by $g_1$ is at least the value encoded by $g_2$, and the lower nook is guarded when the value encoded by $g_1$ is at most the value encoded by $g_2$.}
\label{fig:BothCopyNooks}
\end{figure}

By changing which intersection point restricts which endpoint of the nook segment, we can create a constraint either $x\le y$ or $x\ge y$. \Cref{fig:BothCopyNooks} shows a pair of guard segments with both types of copy nook between them.

Enforcing a constraint $x=y$ requires $2$ nooks, but all of the constraints allowed by ETR-INV-REV are monotone with respect to any single variable, so we only need a single copy nook for each input segment. For example, $xy\ge 1$ is equivalent to $x'y'\ge 1$, $x\ge x'$ and $y\ge y'$ (when there are no other constraints on $x'$ and $y'$).

\subsection{Arranging all the copy nooks}

Here we show that it is possible to arrange all the copy nooks in such a way that none of them interfere with each other. We construct $n$ variable segments and $i$ spaces for input segments. All the guard segments are vertical and have length $1$, and both sets of segments occur at regular intervals with spacing $d$ in two horizontally aligned rows, which are separated by a vertical distance $h$. Suitable values of $h$ and $d$ will be determined later. Let $A_k$ be the bottom of the $k$th variable segment and let $B_k$ be top of this segment, and let $C_jD_j$ be the segment at the $j$th position in the row of input segments, as in \Cref{fig:SegmentBanks}.

\begin{figure}
\centering
\includegraphics[page=8]{Art_Gallery_Paper.pdf}
\caption{Specification of the input segments and variable segments. The possible places for input segments occur at regular intervals, but the constraint gadgets might not allow an input segment to be put in each space. The constraint gadgets don't depend on the copy nooks at all, so we know how many valid spaces for input segments are needed at the time that the copy nooks are being constructed. }
\label{fig:SegmentBanks}
\end{figure}

Between a given variable segment and input segment, there are two types of nook which we might need to construct. If the guard on the variable segment represents a value $y$ and the guard on the input segment represents a variable $x$, then a $\ge$ copy nook (lower nook in \Cref{fig:BothCopyNooks}) enforces the constraint $x\ge y$, and a $\le$ copy nook (upper nook in \Cref{fig:BothCopyNooks}) enforces the constraint $x\le y$. 

\begin{figure}
\centering
\includegraphics[page=9]{Art_Gallery_Paper.pdf}
\caption{A $\ge$ copy nook that restricts a guard on $C_jD_j$ to be above the relative position of the guard on $A_kB_k$. If the nook segment is above the line $\overleftrightarrow{C_{j+1}B_k}$ and below the line $D_jA_k$ as shown, then its partial visibility region doesn't intersect any segments that it isn't supposed to.}
\label{fig:GENookBound}
\end{figure}

\begin{figure}
\centering
\includegraphics[page=10]{Art_Gallery_Paper.pdf}
\caption{A $\le$ copy nook that restricts the guard on $C_jD_j$ to be below the relative position of the guard on $A_kB_k$. The nook segment should be above $\overleftrightarrow{C_jB_k}$ and below $\overleftrightarrow{D_{j-1}A_k}$.}
\label{fig:LENookBound}
\end{figure}

\Cref{lem:SingleNook} is the first step in showing the all the appropriate nooks can be placed.

\begin{lemma}\label{lem:SingleNook}
Suppose for some $k$ and $j$ we have that $\overleftrightarrow{B_kC_{j+1}}$ doesn't intersect $A_{k+1}B_{k+1}$. Then if the nook segment $EF$ for a $\ge$ copy nook between variable segment $A_kB_k$ and input segment $C_jD_j$ is placed above the line $\overleftrightarrow{B_kC_{j+1}}$ and below the line $\overleftrightarrow{A_kD_j}$ then no obstructions occur. That is, $\overleftrightarrow{EB_k}$ doesn't intersect $A_{k+1}B_{k+1}$ or $C_{j+1}D_{j+1}$, $\overleftrightarrow{FC_j}$ doesn't intersect $A_{k-1}B_{k-1}$ or $C_{j-1}D_{j-1}$, and $\overleftrightarrow{FA_k}$ doesn't intersect $ED_j$. Furthermore, any line between the nook segment and either guard segment is steeper than $\overleftrightarrow{B_kC_{j+1}}$ and less steep than $\overleftrightarrow{B_kC_j}$.

Similarly, if $\overleftrightarrow{B_kD_j}$ doesn't intersect $A_{k+1}B_{k+1}$ then a $\le$ copy nook between variable segment $A_kB_k$ and input segment $C_jD_j$ will have no obstructions if the nook segment $EF$ is above the line $\overleftrightarrow{B_kC_j}$ and below the line $\overleftrightarrow{A_kD_{j-1}}$, and the lines between the nook segment and its guard segments have slope bounded by $\overleftrightarrow{B_kC_j}$ and $\overleftrightarrow{A_kD_{j-1}}$.
\end{lemma}

\begin{proof}
See \Cref{fig:GENookBound,fig:LENookBound}. For the $\ge$ nook, we know that $\overleftrightarrow{FC_j}$ doesn't intersect $A_{k-1}B_{k-1}$ since $\overleftrightarrow{FC_j}$ is steeper than $\overleftrightarrow{B_kC_{j+1}}$ and the lines $\overleftrightarrow{B_kC_{j+1}}$ and $\overleftrightarrow{B_{k-1}C_j}$ are parallel. The $\le$ case is similar.
\end{proof}

In order to prevent the copy gadgets from interfering with the constraint gadgets later, it will be helpful to have a bound on the slopes of the lines bounding the partial visibility regions of each copy nook. We will construct these nooks so that the partial visibility regions are bounding by lines with slope between $-\frac12$ and $-1$. By \Cref{lem:SingleNook}, this holds for a $\le$ or $\ge$ nook between segments $A_kB_k$ and $C_jD_j$ as long as lines $\overleftrightarrow{B_kC_j}$, $\overleftrightarrow{B_kC_{j+1}}$, $\overleftrightarrow{A_kD_{j-1}}$, and $\overleftrightarrow{A_kD_j}$ have slopes in this range.

For general $k$ and $j$, the line $\overleftrightarrow{B_kC_j}$ has slope $-\frac{h+1}{d(j-k)}$ and the line $\overleftrightarrow{A_kD_j}$ has slope $-\frac{h-1}{d(j-k)}$. These are between $-\frac12$ and $-1$ so long as:

\begin{equation}
\label{HBound}
\frac12d(j-k)+1\le h\le d(j-k)-1
\end{equation}

There $n$ variable segments and $i$ input segments, giving $n+i-1$ values of $j-k$. However, we also need these bounds to hold for one value of $j$ beyond the row of input segments in either direction, so we should choose $h$ so that the largest and smallest values of $j-k$ which satisfy (\ref{HBound}) differ by at least $n+i+1$. These values of $j-k$ are:
\begin{equation}
\label{JKRange}
\frac{h+1}{d}\le j-k\le 2\frac{h-1}{d}
\end{equation}
So as long as $d\ge 1$, it is sufficient to have $h\ge (n+i+2)d$. 

If $d>2$, these conditions on the slopes of $\overleftrightarrow{B_kC_j}$ and $\overleftrightarrow{A_kD_j}$ also ensure that $\overleftrightarrow{B_kC_{j+1}}$ and $\overleftrightarrow{B_kD_j}$ don't intersect $A_{k+1}B_{k+1}$ for these values of $j-k$.

This tells us how to construct nooks each of which only sees the correct pair of segments, but not how to construct all the nooks without them interfering with each other. The partial visibility regions of multiple nooks can intersect without issue, but the nook segment and the two walls of a nook must not occlude the partial visibility region of any other nook. \Cref{lem:NookWall} and \Cref{lem:Tube} will help us complete the construction.

\begin{lemma}\label{lem:NookWall}
If a $\le$ or $\ge$ copy nook between $A_kB_k$ and $C_jD_j$ has nook segment $EF$ with length $\alpha$ and the horizontal distance between the nook segment and $\overleftrightarrow{C_jD_j}$ is $\beta$, then all the walls of the nook are a horizontal distance at least $\beta\frac{1}{1+\alpha}$ from $\overleftrightarrow{C_jD_j}$, as in \Cref{fig:NookWall}. In particular this distance can be made arbitrarily large by increasing $\beta$.
\end{lemma}

\begin{proof}
Calculate the intersection point of $\overleftrightarrow{ED_j}$ and $\overleftrightarrow{FC_j}$ (see \Cref{fig:NookWall}).
\end{proof}

\begin{figure}
\centering
\includegraphics[page=11]{Art_Gallery_Paper.pdf}
\caption{Illustration of \Cref{lem:NookWall}.}
\label{fig:NookWall}
\end{figure}

\begin{lemma}\label{lem:Tube}
Suppose lines $\ell_1$ starting at $B_k$ and $\ell_2$ starting at $D_j$ are parallel, as in \Cref{fig:Tube}. Then if a $\le$ or $\ge$ nook between segments $A_kB_k$ and $C_jD_j$ has a nook segment which is in the tubular region bounded by $\ell_1$ and $\ell_2$, then all points in the partial visibility region of the nook which are to the left of $\overleftrightarrow{A_kB_k}$ are inside the same tubular region.
\end{lemma}

\begin{proof}
See \Cref{fig:Tube}.
\end{proof}

\begin{figure}
\centering
\includegraphics[page=12]{Art_Gallery_Paper.pdf}
\caption{Illustration of \Cref{lem:Tube}.}
\label{fig:Tube}
\end{figure}

We can now construct all the copy nooks.

\begin{lemma}\label{lem:TotalCopy}
Suppose there is some $p$ such that for any $1\le k\le n$ and $p\le j\le p+i$ we have that $\overleftrightarrow{A_kD_j}$ is steeper than $\overleftrightarrow{B_kC_{j+1}}$ by a slope strictly greater than some $\epsilon$ (independent of $k$ and $j$), and these lines meet the conditions from \Cref{lem:SingleNook}. Then we can construct any number of non-interfering $\ge$ or $\le$ copy nooks between variable segments $A_kB_k$ and input segments $C_jD_j$ for $1\le k\le n$ and $p+1\le j\le p+i$.
\end{lemma}

\begin{proof}
For a copy nook between $A_kB_k$ and $C_jD_j$, we want to create a tube in order to apply \Cref{lem:Tube}. We should to be able to place a nook segment arbitrarily far along the tube without violating the assumptions of \Cref{lem:SingleNook}. For a $\ge$ nook, the lines $\ell_1$ and $\ell_2$ should have slope steeper than $\overleftrightarrow{A_kD_j}$ and less steep than $\overleftrightarrow{B_kC_{j+1}}$, as in \Cref{fig:TubeAndRegion}. For a $\le$ nook, we instead bound by lines $\overleftrightarrow{A_kD_{j-1}}$ and $\overleftrightarrow{B_kC_{j}}$. By the conditions of the slopes of $\overleftrightarrow{A_kD_j}$ and $\overleftrightarrow{B_kC_{j+1}}$, we can always choose this tube so that the slope is an integer multiple of $\frac12\epsilon$. 

\begin{figure}
\centering
\includegraphics[page=13]{Art_Gallery_Paper.pdf}
\caption{A tube for \cref{lem:Tube} which is contained in the region from \Cref{lem:SingleNook} for points sufficiently far to the left.}
\label{fig:TubeAndRegion}
\end{figure}

Since $\overleftrightarrow{A_kD_j}$ is steeper than $\overleftrightarrow{A_kD_{j+1}}$, the only way for two parallel tubes chosen this way to intersect is if there is a $\ge$ copy nook between $A_kB_k$ and $C_jD_j$ and a $\le$ copy nook between $A_kB_k$ and $C_{j+1}D_{j+1}$. If this happens, then the regions for placing these nook segments coincide, but we choose the two tubes for these nooks to have different slopes.

\begin{figure}
\centering
\includegraphics[page=14]{Art_Gallery_Paper.pdf}
\caption{With the tubes chosen, we can place the copy nooks so that the walls of each nook are left of the red line.}
\label{fig:AwayTubes}
\end{figure}

We can now choose a vertical line $v$ such that no point left of $v$ is in more than one tube, as in \Cref{fig:AwayTubes}. The horizontal distance from $A_1B_1$ to $v$ is at most $\mathcal{O}(n\epsilon^{-1})$. Using \Cref{lem:NookWall}, we can construct a nook in each tube such that the walls of the nook are on the left of $v$, so by \Cref{lem:Tube}, none of these nooks interfere with each other, as required.

\end{proof}

What remains to show is that for any $n$ and $i$ we can choose $h$ and $d$ so that there exists a $p$ which satisfies the conditions of \Cref{lem:TotalCopy}. The difference in slopes between $\overleftrightarrow{A_kD_j}$ and $\overleftrightarrow{B_kC_{j+1}}$ is:

\[\frac{h-1}{d(j-k)}-\frac{h+1}{d(j+1-k)}=\frac{h-1-2(j-k)}{d(j-k)(j+1-k)}\]

So we need $h>2(j-k)+2$, so that this is larger than $\epsilon=d(h(h+d))^{-1}\le (d(j-k)(j+1-k))^{-1}$ (by (\ref{JKRange})). Since $h\ge (n+i+2)d$, $j-k$ will always be at least $3$ by (\ref{JKRange}). So if we choose $d>4$, then $\frac12d(j-k)>2(j-k)+2$, meaning $h>2(j-k)+2$ is satisfied by any value of $h$ satisfying (\ref{HBound}). So we can choose $h\ge (n+i+2)d$ and $d>4$. Since the value of $d$ doesn't need to depend on $n$ or $i$, we will fix $d=5$.

\section{The constraint gadgets}\label{sec:GadgetVerification}

We start with some specifications that a constraint gadget should adhere to in order to be compatible with the copy nooks. There are several bad cases that we need to prevent:

\begin{itemize}
    \item The constraint gadget obstructs the visibility between an input segment and a copy nook.
    \item A guard on a variable segment can see part of a nook segment in the constraint gadget.
    \item An guard used by the gadget (called an \emph{auxiliary guard}) can see part of the nook segment of a copy nook.
\end{itemize}

For each constraint gadget, we designate $R_1$ and $R_2$, shown in \Cref{fig:ConstraintGadget}. The region $R_1$ is bounded by the convex hull of the input segments to that gadget and lines of slope $-\frac12$ and $-1$ through the bottom of the leftmost input segment and the top of the rightmost input segment respectively. $R_2$ is the set of points that can be connected to a point outside of the constraint gadget by a line segment that is contained in the polygon and has slope between $-1$ and $0$.

\begin{figure}
\centering
\includegraphics[page=15]{Art_Gallery_Paper.pdf}
\caption{Diagram of a constraint gadget. The region $R_1$ is bounded by lines with slope $-1$ and $-\frac12$. The region $R_2$ is bounded by lines with slope $-1$ and the polygon wall. The yellow region is a an auxiliary guard region, which must not intersect $R_2$.}
\label{fig:ConstraintGadget}
\end{figure}

\begin{lemma}\label{lem:Region1}
The region $R_1$ contains all possible sight lines between an input segment and the nook segment of one of the copy nooks, so if the constraint gadget does not have any walls that intersect this region then it will not obstruct the copy gadgets. 
\end{lemma}

\begin{proof}
The construction in \Cref{sec:CopyNooks} ensures that all sight lines between an input segment and a nook segment have slope between $-1$ and $-\frac12$, and so are contained in the region $R_1$.
\end{proof}

\begin{lemma}\label{lem:Region2}
The only points in a constraint gadget that can be seen by a variable segment are in $R_2$. So any nook with a nook segment that doesn't intersect $R_2$ must be guarded by guards on input segments or auxiliary guards in that gadget.

Points in a constraint gadget outside of $R_2$ cannot see any part of the nook segment of any copy nook. So if any auxiliary guard regions don't overlap $R_2$, then a guard in that region can't help guard any of the copy nooks. 
\end{lemma}

\begin{proof}
The construction of the copy nooks ensures that any line between a variable segment and a point at the top of a constraint gadget has slope between $-1$ and $0$. Any sight line between a point on a variable segment and a point in the constraint gadget must pass through the top, so any such line has slope between $-1$ and $0$. So $R_2$ contains all points in the constraint gadget that can be seen by a point on a variable segment.

The construction of the copy nooks also ensures that the partial visibility region of each copy nook is bounded by lines with slope between $-1$ and $-\frac12$. So the only points in a constraint gadget that can see part of a nook segment of a copy nook are in $R_2$.
\end{proof}

\subsection{Inversion gadgets}\label{sec:InversionVerification}

Each inversion gadget consists of a single constraint nook interacting with the two input variables. \Cref{fig:InversionGE,fig:InversionLE} show the $xy\ge 1$ and $x\left(\frac{5}2-y\right)\le 1$ inversion gadgets respectively. \Cref{fig:InversionGESkeleton,fig:InversionLESkeleton} show the geometry of these gadgets in more detail.

\begin{figure}
\centering
\includegraphics[page=16]{Art_Gallery_Paper.pdf}
\caption{The $xy\ge 1$ inversion gadget. The two red segments are the input segments, with the left segment representing the variable $x$ and the right segment representing $y$. Both guards see more of the nook segment the higher they are. The blue segment is a ``phantom'' segment that is used to help verify the gadget.}
\label{fig:InversionGE}
\end{figure}

\begin{figure}
\centering
\includegraphics[page=17]{Art_Gallery_Paper.pdf}
\caption{The $x\left(\frac52-y\right)\le 1$ inversion gadget. Compared to the $xy\ge 1$ gadget, the guard on the left segment (representing $x$) now sees more of the nook segment when it is lower.}
\label{fig:InversionLE}
\end{figure}

In either gadget, we suppose that there is a guard at $X$ on the input segment $GH$. The nook has opening points $P$ and $Q$, so the guard at $X$ can see all the points to the left of point $Y$ on the nook segment $IJ$. For the $xy\ge 1$ gadget, this means that a guard on $AB$ see the rest of $IJ$ if and only if it is above the point $W$. For the $x\left(\frac{5}2-y\right)\le 1$ gadget, a guard on $AB$ should be below $W$ instead.

\begin{figure}
\centering
\includegraphics[page=18]{Art_Gallery_Paper.pdf}
\caption{Labeled diagram of the $xy\ge 1$ inversion gadget shown in \Cref{fig:InversionLE}.}
\label{fig:InversionGESkeleton}
\end{figure}

\begin{figure}
\centering
\includegraphics[page=19]{Art_Gallery_Paper.pdf}
\caption{Labeled diagram of the $x\left(\frac{5}2-y\right)\le 1$ inversion gadget shown in \Cref{fig:InversionGE}.}
\label{fig:InversionLESkeleton}
\end{figure}

In both cases, the guard on $AB$ represents the input $x$ and the guard on $GH$ represents the input $y$. The segments represent the intervals $[\frac12, 2]$, with lower points on the segments corresponding to lower values. So we should show that:

\[\left(\frac32\cdot\frac{|AW|}{|AB|}+\frac12\right)\left(\frac32\cdot\frac{|GX|}{|GH|}+\frac12\right)=1\]
for any $X$ on $GH$ and $W$ on $AB$ as shown. 

\begin{figure}
\centering
\includegraphics[page=20]{Art_Gallery_Paper.pdf}
\caption{As long as the points $P$, $Y$, and $Z$ are colinear, then the value of $|EZ||FY|$ does not depend on the positions of $Y$ and $Z$, so the positions of $Z$ and $Y$ are related by inversion. The curved relationship occurs because $AB$ and $CD$ are not parallel.}
\label{fig:InversionLemma1}
\end{figure}

\begin{lemma}\label{lem:Inversion}
If line segments $AB$ and $CD$ are not parallel, as in \Cref{fig:InversionLemma1}, let $P$ be the intersection of $\overleftrightarrow{AD}$ and $\overleftrightarrow{BC}$. Let $E$ be the point on $\overleftrightarrow{AB}$ such that $\overleftrightarrow{PE}$ and $\overleftrightarrow{CD}$ are parallel, and let $F$ be the point on $\overleftrightarrow{CD}$ such that $\overleftrightarrow{PF}$ and $\overleftrightarrow{AB}$ are parallel. Suppose that a point $Y$ on $CD$ and a point $Z$ on $AB$ are such that $P$, $Y$ and $Z$ are collinear. Then $\frac{|EA|}{|EB|}=\frac{|FC|}{|FD|}$, and letting $\alpha^2=|EA||EB|$ and $\beta^2=|FC||FD|$ we have $\frac{|EZ|}{\alpha}\cdot\frac{|FY|}{\beta}=1$.
\end{lemma}

\begin{proof}
The triangles $EPX$ and $FYP$ are similar, so $\frac{|EZ|}{|EP|}=\frac{|FP|}{|FY|}$, so $|EZ||FY|=|FP||EP|$. In particular, when $Z=A$, we have $|EA||FD|=|FP||EP|$, and when $Z=B$, we have $|EB||FC|=|FP||EP|$, so $|EA||FD|=|EB||FC|$ and $\frac{|EA|}{|EB|}=\frac{|FC|}{|FD|}$. 

Now $|EZ||FY|=|FP||EP|=|EA||FD|=|EB||FC|$, so:

\[|EZ||FY|=\sqrt{|EA||FD||EB||FC|}=\sqrt{\alpha^2\beta^2}=\alpha\beta\]
\end{proof}

The formulas $\frac{|EZ|}{\alpha}$ and $\frac{|EY|}{\beta}$ give mappings from points on the input segments to $\rr$. We want the input segment to correspond to the intervals $[\frac12, 2]$, so we choose $|EB|=4|EA|$ (and therefore $|FD|=4|FC|$), so $\alpha=2|EA|=\frac12|EB|$ and $\beta=2|FC|=\frac12|FD|$. This means that $\frac{|EZ|}{\alpha}$ and $\frac{|FY|}{\beta}$ will map the segments $AB$ and $CD$ respectively onto $[\frac12, 2]$.

On its own, \Cref{lem:Inversion} could be used to construct a nook which enforces an inversion constraint on two input segments which aren't parallel. The input segments to a constraint gadget should be parallel, so we will need the result of \Cref{lem:AngleCopy} to correct the orientation.

\begin{figure}
\centering
\includegraphics[page=21]{Art_Gallery_Paper.pdf}
\caption{If $X$ is on $GH$, then define $Y$ by projecting $X$ onto $IJ$ through $Q$ and then projecting that point onto $CD$ through $P$. The relative position of $X$ on $GH$ is the same as the relative position of $Y$ on $CD$.}
\label{fig:InversionLemma2}
\end{figure}

\begin{lemma}\label{lem:AngleCopy}
Suppose line segments $GH$, $CD$, and $IJ$ are such that $\overleftrightarrow{GH}$, $\overleftrightarrow{CD}$, and $\overleftrightarrow{IJ}$ all intersect at a point $O$, as in \Cref{fig:InversionLemma2}. Also suppose that the ratios $\frac{|OG|}{|OH|}$ and $\frac{|OC|}{|OD|}$ are the same. Let $P$ be the point where $ID$ and $JC$ intersect, and $Q$ be the point where $IH$ and $JG$ intersect. Suppose that points $X$ on $GH$, $Y$ on $CD$ and $W$ on $IJ$ are such that $W$, $P$, and $Y$ are collinear and $W$, $Q$ and $X$ are collinear. Then $\frac{|GX|}{|GH|}=\frac{|CY|}{|CD|}$.
\end{lemma}

\begin{proof}
Place the figure in the vector space $\rr^2$ with the point $O$ at $(0, 0)$. We will show that there is a linear map which sends points $G, X$ and $H$ to $C, Y$ and $D$ respectively.

The pairs of vectors $\{I, G\}$ and $\{I, C\}$ are each bases for $\rr^2$. Let $\theta$ be the linear isomorphism $\rr^2\rightarrow\rr^2$ which sends a vector $V$ to the coefficients $(t, s)$ such $V=tI+sG$, and let $\psi$ be a similar map which writes $V$ as $tI+sC$. Now the linear map $\psi^{-1}\circ \theta$ fixes points on the line containing $I$ and $J$, and sends $G$ to $C$. Since $\frac{|OA|}{|OB|}=\frac{|OC|}{|OD|}$, this map sends $H$ to $D$, and so sends $Q$ to $P$. The point $W$ is fixed, so the line $\overleftrightarrow{WQ}$ is sent to $\overleftrightarrow{WP}$, meaning that $X$ is sent to $Y$. So points $G, X$ and $H$ are sent to $C, Y$ and $D$ respectively. Since linear maps preserve ratios of distances along a line, we see that:
\[\frac{|GX|}{|GH|}=\frac{|CY|}{|CD|}\]
\end{proof}

Note that the mappings from $GH$ to $IJ$ and from $IJ$ to $CD$ are in general \emph{not} linear, only the composition is. 

\begin{lemma}\label{lem:InversionGadgets}
In either the $xy\ge 1$ or $x\left(\frac{5}2-y\right)\le 1$ gadget, suppose there are points $C$ on $\overleftrightarrow{JP}$ and $D$ on $\overleftrightarrow{IP}$ such that $\overleftrightarrow{GH}$, $\overleftrightarrow{CD}$ and $\overleftrightarrow{IJ}$ all intersect at a point $O$, with $C$ and $D$ satisfying:

\[\frac{|OG|}{|OH|}=\frac{|OC|}{|OD|}\]

Suppose also that there are points $E$ on $\overleftrightarrow{AB}$ and $F$ on $\overleftrightarrow{CD}$ so that $A$ is between $E$ and $B$, $C$ is between $F$ and $D$, $\overleftrightarrow{PF}$ and $\overleftrightarrow{AB}$ are parallel and $\overleftrightarrow{PE}$ and $\overleftrightarrow{CD}$ are parallel, with $|FD|=4|FC|$ and $|EB|=4|EA|$. Then:

\[\left(\frac32\cdot\frac{|AW|}{|AB|}+\frac12\right)\left(\frac32\cdot\frac{|GX|}{|GH|}+\frac12\right)=1\]
so the two inversion gadgets enforce the appropriate constraints.
\end{lemma}

\begin{proof}
Examples of points $C$, $D$, $E$ and $F$ with these properties are shown in \Cref{fig:InversionGESkeleton,fig:InversionLESkeleton}. Let $Z$ be the intersection of $\overleftrightarrow{YP}$ with $CD$. By \Cref{lem:AngleCopy}:

\[\frac{|GX|}{|GH|}=\frac{|CZ|}{CD}\]

By \Cref{lem:Inversion}, we have that:

\[\frac{|EW|}{2|EA|}\cdot\frac{|FZ|}{2|FC|}=1\]

Since $4|EA|=|EB|=\frac43|AB|$, we have that:

\[\left(\frac32\cdot\frac{|AW|}{|AB|}+\frac12\right)=\left(\frac{|AW|}{2|EA|}+\frac{|EA|}{2|EA|}\right)=\frac{|EW|}{2|EA|}\]

Similarly:

\[\left(\frac32\cdot\frac{|GX|}{|GH|}+\frac12\right)=\left(\frac32\cdot\frac{|CZ|}{|CD|}+\frac12\right)=\frac{|FZ|}{2|FC|}\]

So:

\[\left(\frac32\cdot\frac{|AW|}{|AB|}+\frac12\right)\left(\frac32\cdot\frac{|GX|}{|GH|}+\frac12\right)=1\]
\end{proof}

It remains to show that gadgets satisfying the conditions of \Cref{lem:InversionGadgets} can actually be created. The most direct approach would require solving a quadratic equation, potentially introducing irrational coordinates. The authors of \cite{ExistsRHardness} solved a similar issue with their inversion gadgets by carefully choosing coordinates that yield quadratic equations with rational solutions. However, our gadgets can be constructed more geometrically.

\begin{lemma}\label{lem:InversionConstruction}
Geometry for the $xy\ge 1$ and $x\left(\frac{5}2-y\right)\le 1$ gadgets can be constructed in a way that satisfies the conditions of \Cref{lem:InversionGadgets,lem:Region1,lem:Region2}. This does not require irrational coordinates.
\end{lemma}

\begin{proof}
First we show how to construct the $xy\ge 1$ gadget.

Start with the two input segments $AB$ and $GH$. Place the segment $CD$ so that it intersects $AB$ one-third of the way up (at the point representing $1$), and so that $C$ is on $\overleftrightarrow{AG}$ and $D$ is $\overleftrightarrow{BH}$. In \Cref{fig:InversionGESkeleton}, the slope of $CD$ is $-1$. Letting $O$ be the intersection of $\overleftrightarrow{GH}$ and $\overleftrightarrow{CD}$, it is straightforward to check that $\frac{|OG|}{|OH|}=\frac{|OC|}{|OD|}$. 

Now let $P$ be the intersection of $\overleftrightarrow{BC}$ and $\overleftrightarrow{AD}$. Let $E$ on $\overleftrightarrow{AB}$ and $F$ on $\overleftrightarrow{CD}$ be the points such that $\overleftrightarrow{PF}$ and $\overleftrightarrow{AB}$ are parallel and $\overleftrightarrow{PE}$ and $\overleftrightarrow{CD}$ are parallel. 

There is a line through $P$ and the intersection point $Z$ of $AB$ and $CD$. The intersection point is $\frac13$ of the way along $AB$ and is the same fraction of the way along $CD$. So this point must correspond to $1$ under the mapping from \Cref{lem:Inversion}, that is $\frac{EZ}{\alpha}=\frac{FZ}{\beta}=1$. This means that $|EB|=4|EA|$ and $|FD|=4|FC|$, as required.

Now the nook can be created by letting $I$ and $J$ be the intersections of $\overleftrightarrow{AP}$ and $\overleftrightarrow{BP}$ respectively with the horizontal line through $O$. The point $Q$ is the intersection of $\overleftrightarrow{IH}$ and $\overleftrightarrow{JG}$. 

The left wall of the constraint gadget can be placed far enough left as to not intersect $R_1$. Since $DI$ has slope steeper than $DC$ and $DC$ has slope $-1$, the nook segment does not intersect the region $R_2$. So the gadget created satisfies the conditions of \Cref{lem:Region1,lem:Region2}.

Next, we construct the $x\left(\frac{5}2-y\right)\le 1$ gadget. First, choose the position of the collinear points $C$, $D$, and $F$ so that $C$ is between $F$ and $D$ and $|FD|=4|FC|$. In \Cref{fig:LEStep1}, these points are chosen to lie on a line with slope $-1$. Now let $P$ be a point on the vertical line through $F$ (for example, the point on this line that is also on the horizontal line through $D$). Now choose a point $O$ on $\overleftrightarrow{CD}$ such that $D$ is between $O$ and $C$, and choose points $G$ and $H$ on the vertical line through $O$ so that $\frac{|OH|}{|OG|}=\frac{|OD|}{|OC|}$. These should be chosen to be below the line $\overleftrightarrow{PD}$. This is shown in \Cref{fig:LEStep1}.

\begin{figure}
\centering
\includegraphics[page=22]{Art_Gallery_Paper.pdf}
\caption{If we choose $G$ close enough to $O$, then the position of $H$ determined by $\frac{|OH|}{|OG|}=\frac{|OD|}{|OC|}$ will be below $\overleftrightarrow{PD}$.}
\label{fig:LEStep1}
\end{figure}

Next, let $AB$ be the vertical line segment with $A$ on $\overleftrightarrow{PD}$ and $B$ on $\overleftrightarrow{PC}$ with $|AB|=|GH|$ and let $E$ be the point below $A$ on $\overleftrightarrow{AB}$ such that $\overleftrightarrow{PE}$ is parallel to $\overleftrightarrow{CD}$. By \Cref{lem:Inversion}, $\frac{|EA|}{|EB|}=\frac{|FC|}{|FD|}=\frac{1}{4}$. This is shown in \Cref{fig:LEStep2}.

\begin{figure}
\centering
\includegraphics[page=23]{Art_Gallery_Paper.pdf}
\caption{Affine transformations scale parallel segments by the same amount. We want to obtain the final gadget by an affine transform of this geometry, so we should make sure that $|AB|=|GH|$. As long as this condition is satisfied, it is possible to find an appropriate affine transform.}
\label{fig:LEStep2}
\end{figure}

Finally, perform an affine transformation mapping $AB$ and $GH$ to the positions of the input segments of the gadget. Since affine transformations preserve ratios of lengths along lines, the new geometry still satisfies the conditions of \Cref{lem:InversionGadgets}. Since the segment $GH$ is below $\overleftrightarrow{PD}$, the nook does not obstruct itself.

The remaining points $I$, $J$, and $Q$ can be placed as in the construction of the $xy\ge 1$ gadget. It is straightforward to check that the gadget satisfies the conditions of \Cref{lem:Region1,lem:Region2}.

Nothing here requires irrational coordinates.
\end{proof}

\subsection{Addition gadgets}\label{sec:AdditionVerification}

Each nook creates a constraint between the positions of only $2$ guards, but the addition gadgets should create a constraint involving $3$ variables. This can be accomplished by allowing both coordinates of one guard to vary, as illustrated in \Cref{fig:TripleConstraint}.

\begin{figure}
\centering
\includegraphics[page=24]{Art_Gallery_Paper.pdf}
\caption{Creating a constraint between $3$ variables. The guard $g_1$, $g_2$ and $g_3$ each see part of one of the nook segments. If the three shaded regions intersect, then a guard $g_4$ can be placed in the intersection so that $g_4$ guards the parts of the nook segments that aren't guarded by $g_1$, $g_2$ and $g_3$. If the three shaded regions do not intersect, then this isn't possible.}
\label{fig:TripleConstraint}
\end{figure}

The addition gadgets are shown in \Cref{fig:AdditionGE,fig:AdditionLE}. Each gadget has $3$ input segments, an auxiliary guard region, and three constraint nooks. Each constraint nook creates a constraint between the auxiliary guard and one input segment. 

\begin{figure}
\centering
\includegraphics[page=25]{Art_Gallery_Paper.pdf}
\caption{The $x+y\le z$ addition gadget. The partial visibility regions of the $3$ nooks are marked, as is the visibility region of the wedge forming the auxiliary guard region. The auxiliary guard will need to be somewhere in the purple shaded region.}
\label{fig:AdditionGE}
\end{figure}

\begin{figure}
\centering
\includegraphics[page=26]{Art_Gallery_Paper.pdf}
\caption{The $x+y\ge z$ addition gadget.}
\label{fig:AdditionLE}
\end{figure}

The gadget can be guarded when it is possible to place a guard in the auxiliary guard region so that it can see the parts of the nook segments that aren't visible to the input guards. We can use \Cref{lem:CopyLemma} to control the relationship between the positions of the guards on the input segments and their ``shadows'' on the nook segments. \Cref{lem:Addition} allows us to determine what parts of these nook segments can be seen by an auxiliary guard. 

\begin{lemma}\label{lem:Addition}
Suppose the line segments $AB$, $CD$, and $EF$ have the same length and lie on the same vertical line, as in \Cref{fig:SkewAddition}, and suppose $|CB|=|DE|$. Let points $P$, $Q$, and $R$ lie on a vertical line, with $|QP|=|QR|$. Note that $\overleftrightarrow{AP}$, $\overleftrightarrow{CQ}$, and $\overleftrightarrow{ER}$ intersect in a single point, and the same is true of $\overleftrightarrow{BP}$, $\overleftrightarrow{DQ}$, and $\overleftrightarrow{FR}$.

Suppose points $X$ and $Y$ lie on $AB$ and $EF$ respectively. Now $\overleftrightarrow{XP}$ and $\overleftrightarrow{YQ}$ intersect at a point $I$. Let $Z$ be the point on the intersection of $CD$ and $\overleftrightarrow{IQ}$. 

If all is as above, then $\frac12\left(|AX|+|EY|\right)=|CZ|$.
\end{lemma}

\begin{figure}
\centering
\includegraphics[page=27]{Art_Gallery_Paper.pdf}
\caption{By \Cref{lem:Addition}, $\frac12\left(|AX|+|EY|\right)=|CZ|$.}
\label{fig:SkewAddition}
\end{figure}

\begin{proof}
The idea is to transform the geometry from \Cref{fig:SkewAddition} to obtain something like \Cref{fig:TransformedAddition}.

\begin{figure}
\centering
\includegraphics[page=28]{Art_Gallery_Paper.pdf}
\caption{A projective transformation sends points $P$, $Q$, and $R$ to points $P'$, $Q'$ and $R'$ at infinity, so the family of lines through any one of these points is sent to a family of parallel lines.}
\label{fig:TransformedAddition}
\end{figure}

This transformation should send straight lines to straight lines and should fix the line containing points $A, B, C, D, E, F, X, Y$ and $Z$. Additionally, the points $P$, $Q$ and $R$ should be sent to ``infinity'', lines through $P$ should be sent to lines with slope $-1$, lines through $Q$ should be sent to lines with slope $0$, and lines through $R$ should be sent to lines with slope $+1$. If such a transformation exists, then it is clear by straightforward linear algebra that $\frac12\left(|AX|+|EY|\right)=|CZ|$.

Isomorphisms of the projective space $\rr P^2$ send straight lines to straight lines. A degrees-of-freedom argument would be sufficient to find a transformation of projective space with the required properties. Instead we just give it explicitly. Assume that the line containing $A$ and $B$ is vertical, and let $x_0$ be the $x$-coordinate of this line. Let $x_1$ be the $x$-coordinate of $P, Q$ and $R$, $y_0$ the $y$-coordinate of $Q$, and let $a=|QP|=|QR|$, so $P$ has $y$-coordinate $y_0+a$ and $R$ has $y$-coordinate $y_0-a$. Then the transformation $(x, y)\rightarrow (x', y')$ with the desired properties is defined by:

\[\begin{bmatrix}x_0+a&0&-(x_1+a)x_0\\y_0&x_0-x_1&-y_0x_0\\1&0&-x_1\end{bmatrix}\begin{bmatrix}x\\y\\1\end{bmatrix}=\lambda\begin{bmatrix}x'\\y'\\1\end{bmatrix}\]


Writing the map in this form makes it easy to check what happens to lines through $P$, $Q$, and $R$. In particular, for a $3\times3$ matrix $A$, if:

\[A\begin{bmatrix}p_x\\p_y\\1\end{bmatrix}=\begin{bmatrix}a\\b\\0\end{bmatrix}\]

\noindent then the map $(x, y)\rightarrow (x', y')$ given by:

\[\lambda\begin{bmatrix}x'\\y'\\1\end{bmatrix}=A\begin{bmatrix}x\\y\\1\end{bmatrix}\]

\noindent sends lines through $(p_x, p_y)$ to lines parallel to $\begin{bmatrix}a\\b\end{bmatrix}$. 

Note that the setup used here is very similar the addition gadgets in \cite{ExistsRHardness}, although our verification is different.
\end{proof}

We want to create gadgets enforcing $x+y=z$, not $\frac12(x+y)=z$, so we need to use \Cref{lem:CopyLemma} to change the scale and offset of $z$. A schematic of the nooks used by the addition gadgets are shown in \Cref{fig:AdditionNooks,fig:OtherAdditionNooks}.

\begin{figure}
\centering
\includegraphics[page=29]{Art_Gallery_Paper.pdf}
\caption{Schematic of the $x+y\le z$ addition gadget. In the actual gadgets, the lines bounding all the visibility regions have positive slopes as shown in \Cref{fig:AdditionLE,fig:AdditionGE}, but it is not possible to draw this faithfully at a readable scale. }
\label{fig:AdditionNooks}
\end{figure}

\begin{figure}
\centering
\includegraphics[page=30]{Art_Gallery_Paper.pdf}
\caption{The nooks for the $x+y\ge 1$ gadget.}
\label{fig:OtherAdditionNooks}
\end{figure}

\begin{lemma}\label{lem:AdditionVerification}
It is possible to construct addition gadgets enforcing constraints $x+y\ge z$ or $x+y\le z$.
\end{lemma} 

\begin{proof}
We construct the $x+y\le z$ addition gadget. The $x+y\ge z$ gadget is essentially identical (see \Cref{fig:OtherAdditionNooks}).

We should place $A$, $B$, $C$, $D$, $E$, and $F$ are all on the same vertical line with $|AB|=|CD|=|EF|$ and $|BC|=|DE|$. Let $K$ and $J$ be the points on this line so that $(K, C, J, D)$ can be linearly mapped onto $\left(\frac12, 1, 2, 4\right)$, that is $C$ is between $K$ and $D$ with $6|CD|=|KD|$ and $J$ is between $C$ and $D$ with $3|CJ|=|CD|$.

Let $P, Q$ and $R$ lie on a vertical line with $|PQ|=|QR|$, and create nooks with opening points $P$, $Q$ and $R$ and with nook segments $AB$, $KD$, and $EF$ respectively, so that $P$ restricts $B$, $Q$ restricts $C$, and $R$ restricts $F$. Let $N_1$, $N_2$, and $N_3$ be the other opening points of these nooks.

The guards $g_1, g_2, g_3$ on the vertical input segments represent values $x, z, y\in [\frac12, 2]$ respectively. The guard segment for $x$ should have endpoints on $\overleftrightarrow{AN_1}$ and $\overleftrightarrow{BN_1}$, the guard segment for $y$ should have endpoints on $\overleftrightarrow{EN_3}$ and $\overleftrightarrow{FN_3}$, and the guard segment for $z$ should have endpoints on $\overleftrightarrow{KN_2}$ and $\overleftrightarrow{JN_2}$.

We should make sure that the $3$ pairs of lines $\left(\overleftrightarrow{AN_1}, \overleftrightarrow{BP}\right)$, $\left(\overleftrightarrow{LN_2}, \overleftrightarrow{KQ}\right)$ and $\left(\overleftrightarrow{EN_3}, \overleftrightarrow{FR}\right)$ are parallel or intersect to the left of the vertical line containing the nook segments. This ensures that the guard segment for $g_1$ is above $\overleftrightarrow{BP}$, the guard segment for $g_2$ is above $\overleftrightarrow{FR}$, and the guard segment for $g_3$ is below $\overleftrightarrow{CQ}$.

Constructing the addition gadgets to these specifications just requires making the nooks small enough and far away enough that the walls of each nook don't obstruct any of the other nooks, and so that the visibility region of each nook only intersects one of the input segments. It is also straightforward to ensure that the conditions of \Cref{lem:Region1,lem:Region2} are satisfied.

Now we show that, if all is as above, and each guard segment only intersects the partial visibility region of one nook segment, then it is possible to guard the nooks with a single additional guard if and only $x+y\le z$.

Let $X$ be the intersection of $AB$ and the line through $g_1$ and $N_1$. Let $Y$ by the intersection of $EF$ and the line through $g_2$ and $N_3$. Since the $g_1$ is above $\overleftrightarrow{BP}$ and $g_2$ is above $\overleftrightarrow{FR}$, $g_1$ can see all the points below $X$ on $AB$ and $g_2$ can see all the points below $Y$ on $EF$. By \Cref{lem:CopyLemma}:

\[\frac32\cdot\frac{|AX|}{|AB|}+\frac12=x\text{ and }\frac32\cdot\frac{|EY|}{|EF|}+\frac{12}=y\]

One guard can see the remaining parts of $AB$ and $EF$, if and only if it is on or above the lines $\overleftrightarrow{XP}$ and $\overleftrightarrow{YR}$. Let $I$ be the intersection point of $\overleftrightarrow{XP}$ and $\overleftrightarrow{YR}$ and let $Z$ be the intersection of $\overleftrightarrow{CJ}$ and $\overleftrightarrow{IQ}$.

The auxiliary guard can see the points on $CJ$ below $Z$ (if it is placed at $I$), but can't ever see points on $CJ$ above $Z$ while still guarding the upper parts of $AB$ and $EF$.

By \Cref{lem:Addition}, $|AX|+|EY|=2|CZ|$. Since $|AB|=|EF|=|CD|$:

\[x+y=3\frac{|AX|+|EY|}{|CD|}+1=6\frac{CZ}{|CD|}+1=\frac{3}{2}\cdot\frac{|KZ|}{|CJ|}+\frac{1}{2}\]

Since $g_3$ is always below $\overleftrightarrow{CQ}$, the guard $g_3$ can see the rest of $CD$ if and only if it is on or above the line $\overleftrightarrow{ZN_2}$. By \Cref{lem:CopyLemma}, this is when:

\[z\ge \frac{3}{2}\cdot\frac{|KZ|}{|CJ|}+\frac{1}{2}\]

So the nooks can be guarded if and only if $x+y\le z$. 
\end{proof}

\subsection{Final verification}

\begin{lemma}
Given an instance $\Phi$ of ETR-INV-REV, it is possible to construct in polynomial time an art gallery $P$ and a number $n$ so that $P$ can be guarded by $n$ guards if and only if $\Phi$ is satisfiable.
\end{lemma}

\begin{proof}
First, create corresponding constraint gadgets for every constraint of form $xy\le 1$, $x\left(\frac{5}2-y\right)\le 1$, $x+y\le z$, or $x+y\ge z$ that occurs in $\Phi$. Arrange these in a horizontal row along the bottom of the art gallery so that all the input segments occur in a horizontal row and the distance between any pair of input segments is a multiple of $d=5$ times the length of a single input segment.

The construction in \Cref{sec:CopyNooks} can be used to place the copy nooks and variable segments. Now connect all the pieces of the art gallery and add the wedges forming all the guard segments. These wedges should be chosen to be sufficiently narrow so that visibility regions of wedges from different guard regions do not intersect. 

It is straightforward to check that this construction can be done in polynomial time and that the coordinates needed are rational numbers with at most polynomially many bits.

There are $n$ guard regions, consisting of the variable segments, input segments, and auxiliary guards. By \Cref{lem:GuardRegions}, any guarding configuration with $n$ guards must have one guard in each guard region. Conversely, any guarding configuration with one guard in each guard can clearly guard all of these wedges.

There are some walls of the art gallery that aren't part of the wedges, copy nooks, or constraint gadgets. The top, bottom, and right walls of the art gallery can be guarded by any guard on a variable segment. For any two copy nooks, a segment connecting can always be guarded by either of the guards on the variable segments in the visibility regions of those nooks. So any guarding configuration with one guard in each guard region will guard everything expect for possibly the copy nook and constraint gadgets.

By \Cref{lem:InversionGadgets,lem:AdditionVerification}, each constraint gadget can be guarded if and only if the guards on its input segments represent variables satisfying the corresponding constraint. Whenever a variable in $\Phi$ appears in a constraint, there are copy nooks between that variable segment and an input segment to the gadget representing that constraint. By \Cref{lem:CopyNook}, these copy nooks are guarded if and only if the guard on the input segment and the guard on the variable segment represent the value in $[\frac12, 2]$.

So $P$ can be guarded by $n$ guards if and only if $\Phi$ has a satisfying assignment. 
\end{proof}

The $\exists\rr$-hardness of the point-boundary art-gallery problem follows from the $\exists\rr$-hardness of ETR-INV-REV, proving \Cref{BGMain}.

\section{Conclusions}\label{Conclusion}

Our result shows that the X-Y and Y-X variants of the art gallery problem are equivalent under polynomial time reductions when $X, Y\in \{\text{Vertex}, \text{Point}, \text{Boundary}\}$. Visibility is reflexive, so this result seems unsurprising. Maybe it could be proven more directly.

It is interesting to note that our construction, while intended for the point-boundary variant of the art gallery problem, is also sufficient to show the $\exists\rr$-hardness of the standard point-point variant. When it is possible to guard an art gallery produced by our construction in the point-boundary setting, the same guarding configuration is also a point-point guarding configuration. Something similar happens in \cite{ExistsRHardness}, where the construction for the boundary-point variant is also a construction for the point-point variant. It doesn't seem to be possible to adapt either construction to the boundary-boundary variant.

Unlike the construction from \cite{ExistsRHardness}, the nook segments in our construction can be chosen to all be axis-parallel. Each other segment is always entirely guarded by a single guard. It may be possible to adapt our construction to show that guarding orthogonal polygons is $\exists\rr$-hard.

\bibliographystyle{plainurl}
\bibliography{GalleryVariants}

\begin{thebibliography}{10}

\bibitem{IrrationalGuards}
Mikkel Abrahamsen, Anna Adamaszek, and Tillmann Miltzow.
\newblock {Irrational Guards are Sometimes Needed}.
\newblock In Boris Aronov and Matthew~J. Katz, editors, {\em 33rd International Symposium on Computational Geometry (SoCG 2017)}, volume~77 of {\em Leibniz International Proceedings in Informatics (LIPIcs)}, pages 3:1--3:15, Dagstuhl, Germany, 2017. Schloss Dagstuhl -- Leibniz-Zentrum f{\"u}r Informatik.
\newblock URL: \url{https://drops.dagstuhl.de/entities/document/10.4230/LIPIcs.SoCG.2017.3}, \href {https://doi.org/10.4230/LIPIcs.SoCG.2017.3} {\path{doi:10.4230/LIPIcs.SoCG.2017.3}}.

\bibitem{ExistsRHardness}
Mikkel Abrahamsen, Anna Adamaszek, and Tillmann Miltzow.
\newblock The art gallery problem is $\exists\rr$-complete.
\newblock {\em Journal of the ACM}, 69(1):1–70, 2021.
\newblock \href {https://doi.org/10.1145/3486220} {\path{doi:10.1145/3486220}}.

\bibitem{PackingProblem}
Mikkel Abrahamsen, Tillmann Miltzow, and Nadja Seiferth.
\newblock Framework for er-completeness of two-dimensional packing problems.
\newblock In {\em 2020 IEEE 61st Annual Symposium on Foundations of Computer Science (FOCS)}, pages 1014--1021, 2020.
\newblock \href {https://doi.org/10.1109/FOCS46700.2020.00098} {\path{doi:10.1109/FOCS46700.2020.00098}}.

\bibitem{AlmostConvex}
Akanksha Agrawal, Kristine V.~K. Knudsen, Daniel Lokshtanov, Saket Saurabh, and Meirav Zehavi.
\newblock {The Parameterized Complexity of Guarding Almost Convex Polygons}.
\newblock In Sergio Cabello and Danny~Z. Chen, editors, {\em 36th International Symposium on Computational Geometry (SoCG 2020)}, volume 164 of {\em Leibniz International Proceedings in Informatics (LIPIcs)}, pages 3:1--3:16, Dagstuhl, Germany, 2020. Schloss Dagstuhl--Leibniz-Zentrum f{\"u}r Informatik.
\newblock URL: \url{https://drops.dagstuhl.de/opus/volltexte/2020/12161}, \href {https://doi.org/10.4230/LIPIcs.SoCG.2020.3} {\path{doi:10.4230/LIPIcs.SoCG.2020.3}}.

\bibitem{HomotopyUniversality}
Daniel Bertschinger, Nicolas~El Maalouly, Tillmann Miltzow, Patrick Schnider, and Simon Weber.
\newblock Topological art in simple galleries.
\newblock In {\em Symposium on Simplicity in Algorithms ({SOSA})}, pages 87--116. Society for Industrial and Applied Mathematics, jan 2022.
\newblock URL: \url{https://doi.org/10.1137%2F1.9781611977066.8}, \href {https://doi.org/10.1137/1.9781611977066.8} {\path{doi:10.1137/1.9781611977066.8}}.

\bibitem{SubsetPSPACE}
John Canny.
\newblock Some algebraic and geometric computations in pspace.
\newblock In {\em Proceedings of the Twentieth Annual ACM Symposium on Theory of Computing}, STOC '88, page 460–467, New York, NY, USA, 1988. Association for Computing Machinery.
\newblock \href {https://doi.org/10.1145/62212.62257} {\path{doi:10.1145/62212.62257}}.

\bibitem{FirstExact}
Alon Efrat and Sariel Har-Peled.
\newblock Guarding galleries and terrains.
\newblock {\em Inf. Process. Lett.}, 100:238--245, 01 2006.
\newblock \href {https://doi.org/10.1007/978-0-387-35608-2_16} {\path{doi:10.1007/978-0-387-35608-2_16}}.

\bibitem{BoundaryNPHard}
Aldo Laurentini.
\newblock Guarding the walls of an art gallery.
\newblock {\em The Visual Computer}, 15:265--278, 10 1999.
\newblock \href {https://doi.org/10.1007/s003710050177} {\path{doi:10.1007/s003710050177}}.

\bibitem{NPHardness}
D.~T. Lee and Arthur~K. Lin.
\newblock Computational complexity of art gallery problems.
\newblock {\em {IEEE} Transactions on Information Theory}, 32(2):276--282, 1986.
\newblock \href {https://doi.org/10.1109/TIT.1986.1057165} {\path{doi:10.1109/TIT.1986.1057165}}.

\bibitem{ConvexConcave}
Tillmann Miltzow and Reinier~F. Schmiermann.
\newblock On classifying continuous constraint satisfaction problems.
\newblock In {\em 2021 IEEE 62nd Annual Symposium on Foundations of Computer Science (FOCS)}, pages 781--791, 2022.
\newblock \href {https://doi.org/10.1109/FOCS52979.2021.00081} {\path{doi:10.1109/FOCS52979.2021.00081}}.

\bibitem{Mnev}
Nikolai~E Mn{\"e}v.
\newblock The universality theorems on the classification problem of configuration varieties and convex polytopes varieties.
\newblock In {\em Topology and geometry—Rohlin seminar}, pages 527--543. Springer, 1988.

\bibitem{ArtGalleryTextbook}
Joseph O'Rourke.
\newblock {\em Art gallery theorems and algorithms}, volume~57.
\newblock Oxford New York, NY, USA, 1987.

\bibitem{ETRRobustness}
Marcus Schaefer and Daniel Stefankovic.
\newblock Fixed points, nash equilibria, and the existential theory of the reals.
\newblock {\em Theory of Computing Systems}, 60:172--193, 2017.

\bibitem{RealHeirarchy}
Marcus Schaefer and Daniel \v{S}tefankovi\v{c}.
\newblock Beyond the existential theory of the reals.
\newblock {\em Theor. Comp. Sys.}, 68(2):195–226, December 2023.
\newblock \href {https://doi.org/10.1007/s00224-023-10151-x} {\path{doi:10.1007/s00224-023-10151-x}}.

\bibitem{Stretchability}
Peter~W. Shor.
\newblock Stretchability of pseudolines is np-hard.
\newblock In {\em Applied Geometry And Discrete Mathematics}, 1990.

\bibitem{TopologyUniversality}
Jack Stade and Jamie Tucker-Foltz.
\newblock {Topological Universality of the Art Gallery Problem}.
\newblock In Erin~W. Chambers and Joachim Gudmundsson, editors, {\em 39th International Symposium on Computational Geometry (SoCG 2023)}, volume 258 of {\em Leibniz International Proceedings in Informatics (LIPIcs)}, pages 58:1--58:13, Dagstuhl, Germany, 2023. Schloss Dagstuhl -- Leibniz-Zentrum f{\"u}r Informatik.
\newblock URL: \url{https://drops.dagstuhl.de/entities/document/10.4230/LIPIcs.SoCG.2023.58}, \href {https://doi.org/10.4230/LIPIcs.SoCG.2023.58} {\path{doi:10.4230/LIPIcs.SoCG.2023.58}}.

\end{thebibliography}

\end{document}